\documentclass[twocolumn]{autart}    
\usepackage[T1]{fontenc}
\usepackage[utf8]{inputenc}
\usepackage{lmodern}
\usepackage{graphicx}
\usepackage{amsmath,amssymb,amsfonts}
\let\theoremstyle\relax
\usepackage{amsthm}
\usepackage{algorithm}
\usepackage[noend]{algpseudocode}
\usepackage{bm}
\usepackage[hidelinks,bookmarks=false]{hyperref}

\makeatletter
\@ifundefined{theHpart}
  {}
  {}

\@ifundefined{theHsection}
  {}
  {}

\@ifundefined{theHsubsection}
  {}
  {}

\@ifundefined{theHsubsubsection}
  {}
  {}
\makeatother
\usepackage{textcomp}
\usepackage{xcolor}
\usepackage[authoryear,round]{natbib}

\allowdisplaybreaks[1]         
\usepackage{booktabs}
\usepackage{caption}
\usepackage[utf8]{inputenc}

\newtheorem{proposition}{Proposition}
\newtheorem{lemma}{Lemma}
\newtheorem{theorem}{Theorem}

\newtheorem{remarktemp}{Remark}
\newenvironment{remark}{\begin{remarktemp}\normalfont}{\end{remarktemp}}

\newtheorem{definitiontemp}{Definition}
\newenvironment{definition}{\begin{definitiontemp}\normalfont}{\end{definitiontemp}}

\newtheorem{assumptiontemp}{Assumption}
\newenvironment{assumption}{\begin{assumptiontemp}\normalfont}{\end{assumptiontemp}}

\usepackage[nodisplayskipstretch]{setspace}

\begin{document}

\begin{frontmatter}

\title{Continuous Aggregative LQG Games with Delayed Discrete Observations\thanksref{footnoteinfo}}
\thanks[footnoteinfo]{Corresponding author F.~Rajabali.}
\author[poly,gerad]{Farid Rajabali}\ead{farid.rajabali@polymtl.ca}, 
\author[poly,gerad]{Roland Malham\'e}\ead{roland.malhame@polymtl.ca}, \author[poly2]{Sadegh Bolouki}\ead{sadegh.bolouki@polymtl.ca}
\address[poly]{Department of Electrical Engineering, Polytechnique Montr\'{e}al, Canada}
\address[gerad]{GERAD, Canada}
\address[poly2]{Department of Computer and Software Engineering, Polytechnique Montr\'{e}al, Canada}

\begin{keyword}
Aggregative games, Mean Field Games, Dynamic Programming, Partial Observability
\end{keyword}

\begin{abstract}
Mean field game equilibria are predicated on the assumption of immediate pairwise interactions within a population of homogeneous agents with asymptotically vanishing influence as population size increases. However, in many real-world cases, agents receive the information on population with a delay. In this paper, we characterize agent best responses under an information exchange structure whereby agents observe the empirical mean state only at discrete time instants with some delay. Sufficient conditions are presented for the existence of a Nash equilibrium within a finite population of agents, and the change in cost due to discrete empirical mean observations with  delay  versus without delay and  global continuous state observations is also evaluated. 
\end{abstract}

\end{frontmatter}


\section{Introduction}
\label{sec:introduction}
Information underpins decision-making in multi-agent systems, particularly in contexts like Mean Field Games (MFGs) and aggregative games, where agents tailor their control policies based on population statistics. However, real-time access to such information is often challenging, in contrast to the perfect-information assumption used in much of the MFG literature \citep[e.g., see][] {huang2019linear, lacker2022case, lasry2007mean, huang2007large}.\\
Recent advances extend MFG theory to settings with partial observability. In a linear--quadratic--Gaussian (LQG) framework, for example, \cite{bensoussan2021linear} derive decentralized strategies under both individual and common noise using Kalman filtering and fixed‐point methods to establish $\varepsilon$-Nash equilibria, while \cite{huang2006distributed} similarly approximate the empirical mean from noisy local observations to obtain control policies. Specific structures of interaction  have also been studied  under partial observations (PO). Major–minor frameworks  which model systems with one influential major agent and a population of interacting minor agents  under various combinations of PO are considered for LQG cases in \cite{caines2016epsilon} and \cite{firoozi2020epsilon}. In nonlinear settings, i.e. characterized by nonlinear system dynamics or non-quadratic costs,  \cite{sen2019mean} employ nonlinear filtering and the separation principle to study the case of agents with PO of their own states, while \cite{sen2016mean} extend the results of \cite{caines2016epsilon} (minor agents having PO of the major agent) to the nonlinear situation. A discrete-time formulation with risk-sensitive criteria is provided by \cite{saldi2023partially}, who recast the problem on the belief space via dynamic programming. The Pontryagin stochastic maximum principle for mean-field control problems with PO is established in \cite{buckdahn2017mean}, while \cite{bensoussan2021mean} generalize the separation principle under non-Gaussian conditions.\\ 
 In this paper, we consider  an aggregative continuous-time  multi-agent game where agents' cost depends on the population empirical mean state, the latter being observed only at a sequence of discrete time points. This results in a  mixed discrete-continuous time stochastic game for which  we investigate the existence of Nash equilibria. Previous  work in this direction has been recently reported in \cite{rajabali2024partial}. However, while in \cite{rajabali2024partial} the  observation instants coincide with the time of the reported empirical mean, we consider here the more delicate case of time-delayed population mean observations.\\
While our analysis here is limited to a finite population case, our long term  objective is to explore the possibility of the onset of mean field equilibria (infinite populations)  despite delayed and possibly distorted observations of aggregate population statistics.  Indeed, one way to understand  fish school dynamics when the school is viewed as an interacting multi-agent system, is that despite  limited   observations of dynamically changing neighbors, individual fish appear to manage to form through sheer accumulation of data  a time delayed version of the group behavior \citep{rosenthal2015revealing}.  
\\
More generally, agents often know only their neighbors’ states—think pedestrian flow or traffic—while their performance/costs depend on a global population mean they cannot observe directly. Consequently, despite a situation of competition, the agents could find it advantageous to collaborate through  consensus schemes \citep[for instance,][]{olfati2004consensus,olfati2007consensus}  to turn appropriate local exchanges into delayed estimates of the unknown mean. Indeed, such collaboration  implicitly occurs when using navigation apps.  Our model is an idealized abstraction of that situation. It corresponds to the finite horizon, finite population LQG dynamic game solved in \cite{huang2019linear} except that where the global population state was continuously observed by all agents, it is only its empirical mean  which is now observed at discrete time instants and further revealed after a time delay. Under a number of assumptions of which the most important  is that individual agents choose to suppress any private information if, when accounted for, the symmetry of the game is broken,   we develop a dynamic programming (DP) analysis allowing computation of best responses and leading to an existence theorem for a Nash equilibrium.\\
Note that recent works such as \cite{WANG2023110789} and \cite{LIANG2024111518} also construct exact Nash equilibria for finite populations. However, except for assumed \textit{a priori} statistical information about the initial states (as is also the case in MFG analyses), the empirical mean is considered totally unobservable. \textit{Its impact on agents' costs is computed in an open-loop manner} by exploiting symmetries and the smoothing property of conditional expectations given strictly local information. In contrast, by capitalizing here on periodic, albeit delayed, observations of the empirical mean (potentially simulating the time required for a consensus-based exchange of information), agents achieve a distinct \textit{piecewise open-loop} notion of Nash equilibrium.\\
\textbf{Contributions.}  Under assumptions which will be gradually specified, the paper contributions can be summarized as follows:
\begin{enumerate}
    \item[(i)]  \textit{The model}.  At the modeling level, the paper introduces a new class of aggregative games with states evolving in continuous time, yet with delayed aggregate observations occurring in discrete time. We believe this model is a useful abstraction for analyzing collective dynamics over large networks which are sparsely connected, yet with connections  virtually made denser over time through  sharing of information.
    \item[(ii)] \textit{DP and predictive control analysis}. In terms of mathematical analysis, we face a piecewise open loop stochastic game where the control problem bears resemblance to an MFG predictive control situation. Application of DP is made difficult as the latter proceeds backwards while the time delay feature requires predictions over time intervals where control policy is not yet explicit. An assumption that the prediction of empirical mean at the start of each interval is a sufficient statistic, i.e. that it encapsulates all useful past information, allows us to express control policies in terms of that predicted quantity. The latter is treated as a virtual measurement and it  can only be evaluated in a forward manner starting from time 0, once control policies over all discrete time observation intervals have been determined via DP. The sufficient statistic assumption is ultimately shown to be  at least consistent with the (linear) control policy structure it leads to. 
    \item[(iii)]  \textit{Existence of a Nash equilibrium.} Sufficient conditions for the existence of a Nash equilibrium are specified.
    \item[(iv)] \textit{Performance analysis}. We quantify the impact of delay through a closed-form performance-loss formula relative to the full state observation version of the game in \cite{huang2019linear}  and corroborate the theory with simulations across delays and population sizes.
\end{enumerate}
The paper is organized as follows: Section 2 defines the game and costs. Sections 3 and 4 discuss the DP equations and derive the Nash Equilibrium. Section 5 evaluates the performance loss. Section 6 presents numerical results. Finally, Section 7 concludes the paper.

\section{Definition of the Game}
We consider a non-cooperative game with $N$ identical agents whose scalar dynamics are governed by
\begin{equation}\label{1}
dx_i(t) = \big(a x_i(t) + b u_i(t)\big) dt + \sigma dw_i(t), \quad t \geq 0
\end{equation}
where $x_i(t)$ is the state of agent
$i$, and $u_i(t)$ is its control input, and $a \in \mathbb{R},\ b\neq0,\ \sigma\geq 0$.  The $w_i(t),\ 1\le i \le N$, are independent Wiener processes with zero mean, and are independent of the agents' initial states (assumed to have finite variance). Each agent continuously observes its own state but receives the empirical mean
\begin{equation*}
\bar{x}^N(t_j) = \frac{1}{N} \sum_{i=1}^{N} x_i(t_j)
\end{equation*}
only at a later discrete time $t_{j+1} =  (j+1)\Delta t$, for $j=0, \dots, n-1$, where $n = \lfloor\frac{T}{\Delta t}\rfloor$, and $T$ is the time horizon. Thus, the information about the mean at time $t_j$ is received with a delay of $\Delta t$.
Agents aim to track a target trajectory given by $\Gamma \bar{x}^N(t)+\eta$\footnote{
The analysis extends to an affine target $\Gamma \bar{x}^N(t) + \eta$ by applying the state transformation $\tilde{x}_i = x_i - \eta/(1-\Gamma)$ (assuming $\Gamma \ne 1$). The transformation introduces a deterministic input into the state dynamics (a constant drift term), but it does not change the structure of the underlying feedback control problem. For notational simplicity, we proceed with $\eta = 0$.}\footnote{Possible values for $\Gamma$ are discussed in Appendix C.}, while simultaneously minimizing their control effort, as quantified by a cost function parameterized by $q, r > 0$ and $h \geq 0$:
\begin{equation}\label{2}
\begin{aligned}
J_i 
&= \mathbb{E}\!\Biggl[ 
     \int_{0}^{T} 
       \Bigl(q\,(x_i(t) - \Gamma\bar{x}^N(t))^{2} + r u_i^{2}(t)\Bigr)\,dt \\
&\quad + h\,\bigl(x_i(T) - \Gamma \bar{x}^N(T)\bigr)^{2} 
   \Biggr]
\end{aligned}
\end{equation}
The interaction between agents in this model occurs through the cost function, a standard formulation in aggregative games.  As to the choice of linear dependency $\Gamma \bar{x}^N(t)$ of  error on the tracked mean trajectory, values of $\Gamma=1$ or $\Gamma=-1$ are typical in crowd following or congestion avoidance applications respectively.

%
\section{Predictor-Driven DP Equations}
\begin{definition}
A strategy profile $u^*=(u_1^*,\ldots,u_N^*)$ is a Nash equilibrium if, for every agent $i$,
\begin{equation}\label{eq:NE}
J_i(u_i^*,u_{-i}^*) \le J_i(u_i,u_{-i}^*) \quad \forall\, u_i \in \mathcal{M}_i .
\end{equation}
Here $u_{-i}^*=(u_1^*,\ldots,u_{i-1}^*,u_{i+1}^*,\ldots,u_N^*)$, and\\
$
\mathcal{M}_i := \Big\{ u_i:\mathbb{E}\!\int_0^T u_i(t)^2\,dt<\infty$, $u_i(t)$ depends only on $\{x_i(s)\!:\,s\le t\}$ and $\{\bar{x}^N(t_k)\!:\, t_k+\Delta t \le t\}\Big\}$.
\end{definition}
Note that agents observe both their local state in continuous time and the empirical mean with delay at discrete sampling times. However, for reasons stated in Remark 1 below, when developing their interval-wise empirical mean predictor, agents \textit{choose to ignore} the local state information. Thus, in effect, for $t\in [t_j,t_{j+1})$, agents work with a ``partially degraded'' empirical mean predictor defined as follows:
\begin{equation}\label{5}
\begin{aligned}
\hat{\bar{x}}_{j-1}^N(t) 
&:= \mathbb{E}\!\left[\bar{x}^N(t) \,\Big|\, \bar{x}^N(t_{j-1})\right], 
\end{aligned}
\end{equation}
 Note  $j-1$ in \eqref{5} refers to the most recent available observation $\bar{x}^N(t_{j-1})$. Furthermore, we define the prediction error for the time interval $t \in [t_j, t_{j+1})$ as:
\begin{equation}\label{7} 
\Delta_{j-1}(t) = \bar{x}^N(t) - \hat{\bar{x}}_{j-1}^N(t).
\end{equation}
The assumptions underlying this paper are outlined in the following.
\begin{assumption}[Agents preserve the symmetry of the game].\ Agents choose to suppress any private information that would disrupt the symmetry of the game. In particular, they aim to achieve   interval-wise common estimates of the empirical mean.\label{assump1}
\begin{remark}
  If agents were to do away with Assumption 1, and given that they continuously observe their own state (but not other agents'  states), they would then have to account for the fact that the (common) empirical mean observations relay \textit{different information to different agents}. Accounting for this  would break the symmetry of the game and  lead to an infinite regress of reciprocal beliefs situation, characteristic of games of incomplete information \citep{Harsanyi1967PartI}. Agents are thus assumed to avoid this conundrum by working  under the constraint that any information that would lead to differentiation in their empirical mean predictions, in particular due to accounting for   the \textit{specifics} of their own state in the estimator, is ignored. Under such a  constraint, this leads in effect to  interval-wise \textit{deterministic empirical mean predictors}, because interval prediction must now occur in the absence of new information on the empirical mean other than its latest (shared by all agents) observation.
\end{remark}
\end{assumption}
\begin{assumption}[Empirical mean predictors are an interval-wise  Markov process]\label{assump2}
In essence it is assumed that the empirical mean predictor given all observations up to time $t_j$ is,  in and of itself, a sufficient statistic for empirical mean predictions until the next observation becomes available, i.e. on the complete interval $[t_j,t_{j+1})$.
\begin{remark} Assumption 2 is a strong assumption. While it does hold for linear control policies,  it will not  in general hold if the control policies that it leads to turn out to be nonlinear. Thus, the only argument we can offer in support of the assumption is one of \textit{plausibility}; namely by  verifying that, adopting it,  leads to best response policies whose structure remains consistent with the assumption.  In particular, it is the case here as the resulting policies will be shown to be affine feedback laws in the agent state.
\end{remark}
\begin{remark} Let the
predictor state at $t_j$ be $\hat{\bar x}^{N}_{j-1}(t_j)$. This predictor must be conditional on the latest empirical mean observation, i.e. that at time $t_{j-1}$.  However, a subtle point is that in a DP framework proceeding backwards, the best response law on $[t_{j-1}, t_j]$ is not yet known at time $t_j$. Thus the required predictor cannot be computed even if the empirical mean at $t_{j-1}$  is assumed available at $t_j$. Yet, as we proceed with DP, we shall treat this predictor as if it were available as a so-called ``virtual measurement''. Interval-wise control laws will be expressed in terms of such virtual measurements. The latter become computable in a forward direction, starting from knowledge of the empirical mean predictor at the start of the control horizon (because at that early stage, the empirical mean predictor is assumed as an a priori common guess by all agents). Note that this notion of virtual measurement is quite different from the virtual measurements of Kalman filtering. The latter are synthetic observations often added based on reasonable physically-based guesses to address issues of insufficient observability in state estimation problems.
\end{remark}
\end{assumption}
\begin{definition}
We define $u_i$ as a Markov strategy if it is expressed as a function of the current time, the agent’s current state, and the most recent virtual measurement:
\begin{equation}
u_i(t) = f\!\left(t, x_i(t), \hat{\bar{x}}_{j-1}^N(t)\right),\quad t\in[t_j,t_{j+1}).
\end{equation}
\end{definition}
Note that the use of the predictor $\hat{\bar{x}}_{j-1}^N ( t_{j})$ as an argument above is justified by Assumption 2.
To find the best response control $u_i^*$, we begin by introducing the value function for each agent $i = 1,\dots,N$, and for each time interval $j = 0,\dots,n-1$ with $t \in [t_j, T]$, as follows:
\begin{multline}\label{4}
V_{i}(t,X_{i,j-1})
= \inf_{u_i \in \mathcal{M}_i} \mathbb{E} \Bigl[ 
\int_{t}^{T} \Bigl( q\!\left( x_i(\tau) - \Gamma \bar{x}^N(\tau) \right)^{2} \\
+ r u_i^{2}(\tau) \Bigr) d\tau 
+ h\!\left( x_i(T) - \Gamma \bar{x}^N(T) \right)^{2} 
\,\Big|\, X_{i,j-1}(t) \Bigr] \\
= \inf_{u_i \in \mathcal{M}_i} \mathbb{E} \Bigl[
\int_{t}^{t_{j+1}} \Bigl( q\!\left( x_i(\tau) - \Gamma \bar{x}^N(\tau) \right)^{2} 
+ r u_i^{2}(\tau) \Bigr) d\tau 
 \\ + V_{i}(t_{j+1}, X_{i,j})
\,\Big|\, X_{i,j-1}(t) \Bigr]
\end{multline}
where $X_{i,j-1}(t)$ denotes the pair $(x_i(t), \bar{x}^{N}(t_{j-1}) )$. Since agents do not have continuous access to $\bar{x}^N(t)$,  we find it convenient to express the running cost in the value function above in terms of a perturbation of its most recent predictor, i.e. $\hat{\bar{x}}^N_{j-1}(t)$.   This will lead us in the following lemma to the conclusion that the best response policies can be computed based on an adjusted cost function for $t \in [t_j, T)$ where $\bar{x}^N(t)$ is now replaced by its most current predictor. The associated  ``adjusted''\ family of value functions is: 
\begin{gather} 
\widetilde{V}_{i}(t, \widehat X_{i,j-1}) = \inf_{u_i \in \mathcal{M}_i} \mathbb{E} \Bigl[ \int_{t}^{T} \Bigl( q \left( x_i(\tau) - \Gamma \hat{\bar{x}}_{j-1}^N(\tau) \right)^2   \nonumber \\ +r u_i^2(\tau) \Bigr) d\tau 
+ h \left( x_i(T) - \Gamma \hat{\bar{x}}_{n-1}^N(T) \right)^2 \Big| \widehat {X}_{i,j-1}(t) \Bigr].
\label{8}
\end{gather}
where $\widehat X_{i,j-1}(t)$ denotes the pair $(x_i(t), \hat{\bar x}^{N}_{j-1}(t))$. In the following, we shall omit writing the time argument, $t$, whenever no ambiguity arises.

\begin{lemma}
The best response policy associated with the adjusted cost function $\widetilde{V}_{i}(t, \widehat{X}_{i,j-1})$, is identical to that associated with the original cost function $V_{i}(t, {X}_{i,j-1})$.
\end{lemma}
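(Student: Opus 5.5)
The plan is to show that the original and adjusted cost functionals differ only by a term that does not depend on agent $i$'s control $u_i$, given the other agents' strategies and the information structure. The minimizers of the two problems then coincide.

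Write $\bar{x}^N(\tau)=\hat{\bar{x}}^N_{j-1}(\tau)+\Delta_{j-1}(\tau)$ on each interval $[t_j,t_{j+1})$, as in \eqref{7}. Expanding the running cost gives
\begin{multline*}
q\bigl(x_i-\Gamma\bar{x}^N\bigr)^2 = q\bigl(x_i-\Gamma\hat{\bar{x}}^N_{j-1}\bigr)^2 \\ -2q\Gamma\bigl(x_i-\Gamma\hat{\bar{x}}^N_{j-1}\bigr)\Delta_{j-1} + q\Gamma^2\Delta_{j-1}^2 ,
\end{multline*}
and the terminal cost expands in the same way with $\Delta_{n-1}(T)$. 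Summing over intervals, the difference $V_i-\widetilde V_i$ (before the infimum) consists of two kinds of expected terms: the quadratic error terms $\mathbb{E}[\Delta_{j-1}^2]$ and the cross terms $\mathbb{E}[(x_i-\Gamma\hat{\bar{x}}^N_{j-1})\Delta_{j-1}]$.

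For the cross terms, note that by Assumption~1 (Remark~1) the predictor $\hat{\bar{x}}^N_{j-1}(\tau)$ is deterministic given the shared observation $\bar{x}^N(t_{j-1})$. Hence $\mathbb{E}[\hat{\bar{x}}^N_{j-1}\Delta_{j-1}\mid \bar{x}^N(t_{j-1})]=0$ by the defining property \eqref{5} of the conditional expectation.

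The term $\mathbb{E}[x_i(\tau)\Delta_{j-1}(\tau)]$ is the delicate part, because $x_i$ enters $\bar{x}^N$ with weight $1/N$. Here I would use the symmetric-information convention of Assumption~1: the agent evaluates the prediction error conditionally on the common information only, treating $\Delta_{j-1}$ as a zero-mean innovation generated by the Wiener increments and by the deviations of the other agents' states. That innovation is orthogonal to anything measurable with respect to the conditioning. Under the Markov/sufficient-statistic Assumption~2, the best response on $[t_j,t_{j+1})$ is a function of $(t,x_i,\hat{\bar{x}}^N_{j-1})$, and the agent's own contribution to $\Delta_{j-1}$ is ignored by construction of the degraded predictor. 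Concretely, I would argue that under the agent's modelling, $\mathbb{E}[(x_i-\Gamma\hat{\bar{x}}^N_{j-1})\Delta_{j-1}\mid \widehat X_{i,j-1}(t)]=0$ once one applies the tower property with the predictor information.

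Next I would show that $\mathbb{E}[\Delta_{j-1}^2]$ is independent of $u_i$. It is determined by the other agents' fixed strategies $u_{-i}$, the noise, and the predictor, and the agent, suppressing its private information, treats it as exogenous. It therefore contributes an additive constant to the value function.

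Finally, I would combine these steps by a backward induction on $j$ using the DP decomposition \eqref{4}. Assume $V_i(t_{j+1},\cdot)=\widetilde V_i(t_{j+1},\cdot)+c_{j+1}$ with $c_{j+1}$ control-independent. Then on $[t_j,t_{j+1})$ the two infima differ by the constant $c_j=c_{j+1}+\int q\Gamma^2\mathbb{E}\Delta_{j-1}^2$, so the argmins agree.

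The main obstacle is justifying that the cross term vanishes despite $x_i$ appearing inside $\bar{x}^N$. I expect this to rest explicitly on Assumptions~1 and~2 rather than on an exact probabilistic identity, since strictly the correlation is of order $1/N$.
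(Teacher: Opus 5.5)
Your proposal follows the paper's proof essentially step for step: the decomposition $\bar{x}^N=\hat{\bar{x}}^N_{j-1}+\Delta_{j-1}$, the claim that $u_i$ cannot influence the $\Delta_{j-1}^2$ term, orthogonality for the predictor cross term, an appeal to Assumption~1 to discard the $x_i\Delta_{j-1}$ term (in the paper, replacing $\mathbb{E}[\bar{x}^N(\tau)\mid X_{i,j-1}(t)]$ by $\hat{\bar{x}}^N_{j-1}(\tau)$ together with \eqref{10-3}), and backward induction from the terminal cost. Your write-up is, if anything, more explicit than the paper's: you carry the control-independent offsets $c_j$ through the induction, matching the recursion later made explicit in \eqref{42.2}; you have the correct sign $-2q\Gamma$ on the cross term, whereas the paper's \eqref{9} shows $+2q\Gamma$, which is harmless since the term is discarded; and you rightly flag that the vanishing of the $O(1/N)$ correlation between $x_i$ and $\Delta_{j-1}$ follows from the Assumption~1 convention rather than an exact identity, which is precisely its status in the paper.
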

\begin{proof}
To reformulate the cost minimization, we express $\bar{x}^N(\tau)$ in terms of its predicted value $\hat{\bar{x}}_{j-1}^N(\tau)$ and the prediction error $\Delta_{j-1}(\tau)$. Also, we write the DP equation for $V_{i}$ and will show that control policies are identical for both cost functions interval-wise. Thus we rewrite the original cost-to-go for $[t_j,t_{j+1})$ in \eqref{4} as follows:
\begin{multline}\label{9}
V_{i}(t,  X_{i,j-1})
=\inf_{u_i \in \mathcal{M}_i} \mathbb{E}\Bigl[ \int_{t}^{t_{j+1}} 
\Bigl( q\!\left( x_i(\tau) - \Gamma \hat{\bar{x}}_{j-1}^N(\tau)\right)^{2} \\
+ q\Gamma^{2} \bigl(\Delta_{j-1}(\tau)\bigr)^{2} 
+ 2q\Gamma \Delta_{j-1}(\tau) \bigl( x_i(\tau) - \Gamma \hat{\bar{x}}_{j-1}^N(\tau)\bigr) \\
+ r u_i^{2}(\tau) \Bigr) d\tau 
+ V_{i}(t_{j+1}, X_{i,j})  \,\Big|\, X_{i,j-1}(t) \Bigr].
\end{multline}
As one can see in \eqref{9}, expanding the quadratic term in the original cost-to-go \eqref{4} yields terms dependent on $(x_i - \Gamma\hat{\bar{x}}_{j-1}^N)$, on $\Delta_{j-1}$ and a cross-term. The term dependent on $(x_i - \Gamma\hat{\bar{x}}_{j-1}^N)$ corresponds to the adjusted cost $\widetilde{V}_{i}$.  Since by Remark 1, all agents share a common deterministic empirical mean predictor which shapes their best responses, and the stochastic character of the prediction error is only affected by internal noises independent of the controls, the term  $(\Delta_{j-1})^2$ will not affect the form of the best response. By Assumption 1, agents choose to suppress knowledge of the local state when carrying out the interval-wise prediction of the empirical mean. As a result, in their cost calculations, they replace $\mathbb{E}[\bar{x}^N(\tau)|X_{i,j-1}(t)]$ with the estimator $\hat{\bar{x}}_{j-1}^N(\tau)$ previously defined in (4). Consequently, the conditional expectation of the prediction error becomes:
\begin{equation}
  \mathbb E\!\Bigl[ \Delta_{j-1}(\tau)\,\Big|\, \bar {x}^N(t_{j-1})\Bigr]= 0.
  \label{10-3}
\end{equation}
Indeed as written in  \eqref{10-3}, with this redefined view of information the empirical mean estimator $\hat{\bar{x}}_{j-1}^N(\tau)$ becomes unbiased, and it is orthogonal to the estimation error $\Delta_{j-1}(\tau)$ (by the orthogonality property of conditional expectation, see Chapter 3 of \cite{luenberger1997optimization}). Thus:
\[
\mathbb E\!\Big[\Delta_{j-1}(\tau)\,\hat{\bar x}_{j-1}^N(\tau)\,\Big|\, \bar{x}^{N}(t_{j-1})\Big] = 0.
\]
Furthermore, in view of \eqref{10-3}:
\begin{gather*}
\mathbb{E}\!\Bigl[
\Delta_{j-1}(\tau)\,x_i(\tau)\,\Big|\, X_{i,j-1}(t)
\Bigr]=0.
\end{gather*}
As a result, since the difference $V_i - \widetilde{V}_i$ is a term that $u_i$ cannot influence, the optimal policy that minimizes $V_i$ also minimizes $\widetilde{V}_i$. This argument will hold for all intervals by backward induction, starting from the terminal cost expression.
\end{proof}
\begin{remark}  In summary, for estimation purposes of the empirical mean  on time intervals $[t_j,t_{j+1})$, agents rely only on observation $\bar{x}^{N}(t_{j-1})$. In the subsequent DP developments, and by virtue of Assumption 2, information $\bar{x}^{N}(t_{j-1})$ on such time intervals  will be replaced by the sufficient statistic $\hat{\bar x}^{N}_{j-1}(t)$. \end{remark}
\begin{remark}  Except for suppressing knowledge of local state  $x_i$ in evaluating its own predictor of the empirical mean, the agent \textit{still makes full use of the information on its own state} in building its best response, as this remains compatible with the symmetry of the game. 
\end{remark}
Given the discrete nature of observations, while the control must evolve in continuous time, deriving via DP the response policy for agent $i$ requires  solving both a discrete step dynamic program accounting for discrete observations, and an interval-wise continuous dynamic program embedded within the discrete program in between successive observations as outlined below.\\
1. \textit{Embedded interval-wise DP equation}: The DP equation for $\widetilde{V}_{i}(t_j,\widehat{X}_{i,j-1})$ is evaluated at $t\in [t_{j},t_{j+1})$, $j=1,\ldots,n-2$, as follows:
\begin{multline}\label{11}
\widetilde{V}_{i}(t, \widehat{X}_{i,j-1}) = \inf_{u_i \in \mathcal{M}_i} \mathbb{E} \Bigl[ \int_{t}^{t_{j+1}} \Bigl( q\big(x_i(\tau) - \Gamma \hat{\bar{x}}_{j-1}^N(\tau)\big)^2    \\ 
+r u_i^2(\tau) \Bigr) d\tau  + \widetilde{V}_{i}(t_{j+1}, \widehat{X}_{i,j}) \,\Big|\,\widehat {X}_{i,j-1}(t) \Bigr].
\end{multline}
Between consecutive sampling times, the Hamilton-Jacobi-Bellman (HJB) equation is solved as a tracking problem where the predicted empirical mean $\hat{\bar{x}}_{j-1}^N(t)$ is treated as a deterministic but unknown function. The boundary conditions established at $t_{j+1}$ serve as the terminal conditions for solving the interval-wise HJB equations. This approach yields the best response policy for agent $i$ in terms of its state $x_i$ and the predictor.

2. \textit{Discrete component of the DP equation}: 
Boundary conditions (BCs) at the terminal time $T$ and at intermediate times $t_{j+1}$, for $j=1,\ldots,n-1$ are given by:
\begin{equation}\label{12}
\widetilde{V}_{i}(T,\widehat{X}_{i,n-1})
= h\!\left(x_i(T) - \Gamma \hat{\bar{x}}_{n-1}^N(T)\right)^{2}
\end{equation}
\begin{multline}\label{13}
\widetilde{V}_{i}(t_{j+1},\widehat{X}_{i,j-1})
=\mathbb{E}\!\left[\widetilde{V}_{i}(t_{j+1},\widehat{X}_{i,j})
\,\middle|\,\widehat  {X}_{i,j-1}(t_{j+1})\right].
\end{multline}
\section{Dynamic Programming Equation Analysis Over Time Intervals}
\subsection{Control Policy Computation for $[t_{n-1}, T)$}
In this section, we solve DP equation \eqref{11} interval-wise and start from the last interval $[t_{n-1}, T)$ as the boundary conditions impose a backwards propagating solution. A deterministic trajectory $\hat{\bar{x}}_{n-2}^N(t)$, defined for $t \in [t_{n-1},T)$, is assumed. According to Lemma 1, determining the best response policy requires solving DP equation \eqref{11}. To achieve this, we derive the HJB equation for $\widetilde{V}_{i}(t,\widehat{X}_{i,n-2})$, which will yield the optimal control policy over the interval $[t_{n-1},T)$.
\begin{equation}\label{14}
\begin{aligned}
0 &= \frac{\partial \widetilde{V}_{i}}{\partial t} 
   + \min_{u_i} \Biggl[ 
      \frac{\partial \widetilde{V}_{i}}{\partial x_i}\,(a x_i + b u_i) \\
&\quad + \Bigl(q \left(x_i - \Gamma \hat{\bar{x}}_{n-2}^N\right)^{2} + r u_i^{2}\Bigr) 
   + \frac{1}{2} \sigma^{2}\,\frac{\partial^{2} \widetilde{V}_{i}}{\partial x_i^{2}}
   \Biggr]
\end{aligned}
\end{equation}
We define the variables $p(t)$, $s(t)$, and $\rho{}(t)$ to represent the solution of the HJB equation associated with $\widetilde{V}_{i}(t, \widehat{X}_{i,n-2})$, where the value function is assumed to take the following quadratic form:
\begin{equation}
\widetilde{V}_{i}(t, \widehat{X}_{i,n-2}) = p(t) x_i^2(t) + 2s_{n-2}(t)x_i(t) + \rho_{n-2}(t)
\label{15}
\end{equation}
In view of the Hamiltonian in \eqref{14}, and given the quadratic form assumed in \eqref{15}, the optimal control policy $u_i^*$ can be expressed as:
\begin{equation}\label{16}
\begin{aligned}
u_i^*(t) 
= -\frac{b}{2r}\,
   \left( \tfrac{\partial \widetilde{V}_i(t,\widehat{X}_{i,n-2})}{\partial x_i} \right)
= -\frac{b}{r}\,\bigl( p(t)x_i(t) + s_{n-2}(t) \bigr).
\end{aligned}
\end{equation}
To solve the HJB in \eqref{14}, we need to specify the boundary condition for $\widetilde{V}_{i}(t, \widehat{X}_{i,n-2})$ at $t=T$: 
\begin{multline}
\widetilde{V}_{i}(T, \widehat{X}_{i,n-2})  =\mathbb{E} \Bigl[ \widetilde{V}_{i}(T,\widehat{X}_{i,n-1})    \Big| \widehat{X}_{i,n-2}(T) \Bigr]   \\=h x_i^2(T) - 2h\Gamma x_i(T)\mathbb{E}\left[\hat{\bar{x}}_{n-1}^N(T)\Big|\hat{\bar{x}}_{n-2}^N(T)\right] \\ +h\Gamma^2\mathbb{E}\left[(\hat{\bar{x}}_{n-1}^N(T))^2\Big|\hat{\bar{x}}_{n-2}^N(T)\right].
\end{multline}
In order to solve HJB in \eqref{14}, we substitute \eqref{15} and \eqref{16} in it and solve for finding differential equations for $p(t)$, $s(t)$ and $\rho(t)$ as follows: 
\begin{equation}\label{18}
\frac{dp}{dt} = -2pa + \frac{b^2}{r}p^2 - q, 
\qquad p(T) = h
\end{equation}
\begin{equation}\label{19}
\begin{aligned}
\frac{ds_{n-2}}{dt} 
&= -\left(a - \frac{b^2}{r} p \right) s_{n-2} 
   + q \Gamma \hat{\bar{x}}_{n-2}^N, \\
s_{n-2}(T) &= -h \Gamma \hat{\bar{x}}_{n-2}^N(T)
\end{aligned}
\end{equation}
\begin{equation}\label{20}
\begin{aligned}
\frac{d \rho_{n-2}}{dt} 
&= \frac{b^2}{r} s_{n-2}^2 
   - q \bigl(\Gamma \hat{\bar{x}}_{n-2}^N\bigr)^2 
   - \sigma^2 p, \\
\rho_{n-2}(T) 
&= h \Gamma^2 \,\mathbb{E}\!\left[(\hat{\bar{x}}_{n-1}^N(T))^2 \,\middle|\, \hat{\bar{x}}_{n-2}^N(T)\right].
\end{aligned}
\end{equation}

\begin{assumption}\label{assump3}
The following Riccati differential equation has a unique solution on $[0,T]$:
\begin{gather}
\frac{d \alpha(t)}{dt} 
= -2 \left( a - \frac{b^2}{r} p(t) \right) \alpha(t) 
   + \frac{b^2}{r}\,\alpha^2(t) + q \Gamma, \nonumber\\
\alpha(T) = -h \Gamma.
\label{21}
\end{gather}
\end{assumption}
In the following proposition, we identify the nature of the Nash equilibrium of the game on $[t_{n-1},T]$ and establish  the predictor equation.
\begin{proposition}
If Assumptions \ref{assump1}-\ref{assump3} hold, then, for $t \in [t_{n-1}, T)$, the set of Markov Nash equilibrium strategies for the agents $i = 1, \ldots, N$, is given by:
\begin{gather}
u_i^*(t) = -\frac{b}{r} \left( p(t) x_i(t) + \alpha(t) \hat{\bar{x}}_{n-2}^N(t) \right)
\label{22}
\end{gather}
where the evolution of the predicted empirical mean state, $\hat{\bar{x}}_{n-2}^N(t)$, is described by the following differential equation with initial condition of virtual measurement, $\hat{\bar{x}}^N_{n-2}(t_{n-1})$:
\begin{equation}\label{23}
\frac{d \hat{\bar{x}}_{n-2}^N(t)}{dt} = \left( a - \frac{b^2}{r} p(t) - \frac{b^2}{r} \alpha(t) \right) \hat{\bar{x}}_{n-2}^N(t)
\end{equation}
\end{proposition}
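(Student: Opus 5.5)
The plan is to combine the interval-wise best response already derived, \eqref{16} with \eqref{18}--\eqref{19}, with a consistency (fixed-point) requirement on the common deterministic predictor. By Lemma 1, agent $i$'s best response on $[t_{n-1},T)$ can be computed from the adjusted value function $\widetilde V_i$. For any given deterministic trajectory $\hat{\bar x}^N_{n-2}(\cdot)$, the quadratic ansatz \eqref{15} solves the HJB \eqref{14}. The control \eqref{16} is therefore optimal, and a standard verification argument (completion of squares with Itô's formula on $p x_i^2+2s_{n-2}x_i+\rho_{n-2}$) shows that it minimizes $\widetilde V_i$ over $\mathcal M_i$. It then remains to show two things: $s_{n-2}(t)=\alpha(t)\hat{\bar x}^N_{n-2}(t)$, and the predictor generated by everyone using this law satisfies \eqref{23}.

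\textbf{Step 1 (predictor dynamics under a linear profile).} Suppose all agents use $u_k=-\frac{b}{r}(p x_k+s_{n-2})$ with a common deterministic $s_{n-2}$, which is the symmetric situation guaranteed by Assumption 1. Averaging \eqref{1} over $k$ gives
$d\bar x^N=\bigl((a-\tfrac{b^2}{r}p)\bar x^N-\tfrac{b^2}{r}s_{n-2}\bigr)dt+\tfrac{\sigma}{N}\sum_k dw_k$.
Take conditional expectation given $\bar x^N(t_{n-2})$; equivalently, by Assumption 2, given the virtual measurement $\hat{\bar x}^N_{n-2}(t_{n-1})$. The martingale term vanishes, and linearity of the policy lets the expectation pass through. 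This yields the linear ODE
$\dot{\hat{\bar x}}{}^N_{n-2}=(a-\tfrac{b^2}{r}p)\hat{\bar x}^N_{n-2}-\tfrac{b^2}{r}s_{n-2}$
on $[t_{n-1},T)$.

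\textbf{Step 2 (ansatz and Riccati reduction).} Look for $s_{n-2}(t)=\alpha(t)\hat{\bar x}^N_{n-2}(t)$ with deterministic $\alpha$. Differentiate and substitute both \eqref{19} and the ODE of Step 1. Requiring the resulting identity to hold for every value of the virtual measurement, so that the coefficient of $\hat{\bar x}^N_{n-2}$ vanishes, gives
\[
\dot\alpha=-2\Bigl(a-\tfrac{b^2}{r}p\Bigr)\alpha+\tfrac{b^2}{r}\alpha^2+q\Gamma ,
\]
which is exactly \eqref{21}. The terminal condition $s_{n-2}(T)=-h\Gamma\hat{\bar x}^N_{n-2}(T)$ gives $\alpha(T)=-h\Gamma$.

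\textbf{Step 3 (conclusion).} By Assumption 3, the Riccati equation \eqref{21} has a unique solution on $[0,T]$, so it is well defined on $[t_{n-1},T]$. With $s_{n-2}=\alpha\hat{\bar x}^N_{n-2}$, the linear ODE \eqref{23} is obtained by substitution, and \eqref{16} becomes \eqref{22}.

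It remains to confirm the equilibrium property. Fix all other agents at \eqref{22}; they generate the predictor \eqref{23}, and agent $i$'s best response to that predictor is \eqref{16}. By uniqueness of the linear ODE \eqref{19} for given $\hat{\bar x}^N_{n-2}$, this best response coincides with \eqref{22}. A unilateral deviation by agent $i$ does not change the commonly used deterministic predictor under Assumptions 1 and 2, so no agent can gain by deviating, and the profile is a Markov Nash equilibrium. Uniqueness of the equilibrium in the Markov class follows from uniqueness of solutions of \eqref{18}, \eqref{19} and \eqref{21}.

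\textbf{Main obstacle.} The delicate point is the consistency step (Steps 1 and 3): justifying that $\hat{\bar x}^N_{n-2}$ in agent $i$'s HJB is the same predictor generated by the symmetric profile. This includes agent $i$'s own contribution of $1/N$ to the mean, which it treats as exogenous because of the information suppression in Assumption 1. I would handle it by appealing to Lemma 1 and Assumption 2 to treat the predictor as a fixed deterministic input during the optimization, and only afterwards impose the fixed-point condition. Linearity of \eqref{22} then confirms, as in Remark 2, that the Markov/sufficiency assumption remains consistent.
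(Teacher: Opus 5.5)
Your proposal is correct and follows essentially the same route as the paper: the LQ tracking best response to a deterministic predictor, averaging the closed loop to obtain the predictor ODE, the ansatz $s_{n-2}=\alpha\hat{\bar x}^N_{n-2}$, and coefficient matching in (19) to recover (21). You spell out the verification and fixed-point consistency more explicitly than the paper does. Your side claim of uniqueness is argued too loosely, though. Equation (19) coupled with the predictor ODE is a two-point boundary value problem, so uniqueness of each ODE taken separately is not enough; what closes it is the decoupling observation that $e=s_{n-2}-\alpha\hat{\bar x}^N_{n-2}$ satisfies $\dot e=-\bigl(a-\frac{b^2}{r}p-\frac{b^2}{r}\alpha\bigr)e$ with $e(T)=0$, hence $e\equiv 0$.
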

\begin{proof}
Based on Definition 1, the $u_i^*$ in \eqref{16} which minimizes $\widetilde{V}_{i}$ is the Nash equilibrium of the game since any deviation from this policy by any agent would fail to minimize its cost, assuming that other agents have frozen their strategies. 
Given that $\hat{\bar{x}}_{n-2}^N(\tau)$ is a deterministic, although initially unknown trajectory, the optimal control problem in \eqref{11} resembles a standard LQG tracking problem. However, since all agents perform the same analysis due to homogeneity, a common feedback control policy must emerge. For consistency, this policy must be such  that, when applied by all agents, the resulting closed-loop predictor of the empirical population mean matches the trajectory being tracked. This requirement is analogous to finding equilibrium control policies in MFGs, where a fixed-point condition is essential for consistency. In this context, predictor functions are updated each time a new global empirical mean is observed, necessitating the use of DP equations on an interval-wise basis. In order to calculate this fixed-point, and following the approach in \cite{malhame2020mean}, we assume the structure of $s_{n-2}(t)$ to be of the form: 
\begin{equation}\label{24}
s_{n-2}(t) = \alpha(t) \hat{\bar{x}}_{n-2}^N(t)
\end{equation}
This assumption enables the decoupling of the forward and backward components of the overall solution. In order to express $\alpha(t)$ and $\hat{\bar{x}}^N_{n-2}$, one can substitute $u_i^*(t)$ from \eqref{16} in \eqref{1} to find the closed-loop control equation for $x_i(t)$. By averaging this equation over all agents and imposing the fixed-point consistency property, the differential equation for the empirical mean $\bar{x}^N(t)$ and subsequently, that of  the predictor  $\hat{\bar{x}}^N_{n-2}(t)$, equation \eqref{23}, can be determined. Now substituting \eqref{24} into \eqref{19} and recalling \eqref{23} as well as the boundary condition for \eqref{19}, one obtains \eqref{21} for $\alpha(t)$.
\end{proof}
\begin{remark}\label{remark6}
With the predictor differential equation established, the predictor can now be expressed as a function of the virtual measurement $\hat{\bar{x}}^N_{n-2}(t_{n-1})$. The predictor evolves with transition kernel $\phi_p(t,s):=\exp\!\bigl[\int_s^t \bigl(a-\frac{b^{2}}{r}(p(\tau)+\alpha(\tau))\bigr)\,d\tau\bigr]$. Similarly, the open-loop agent dynamics evolve with kernel $\phi(t,s):=\exp\!\bigl[\int_s^t \bigl(a-\frac{b^{2}}{r}p(\tau)\bigr)\,d\tau\bigr]$. In accordance with Assumption~\ref{assump2}, we have:
\begin{equation}\label{25}
\hat{\bar{x}}^N_{n-2}(t)=\phi_p(t,t_{n-1})\hat{\bar{x}}^N_{n-2}(t_{n-1})
\end{equation}
Note the propagation and update equations provide an explicit update mechanism, thereby verifying Assumption~\ref{assump2}.

Once $\hat{\bar{x}}^N_{n-2}(t)$ is determined, $s_{n-2}(t)$ becomes known, allowing the differential equation \eqref{20} to be solved for $\rho_{n-2}(t)$. Furthermore, $\rho_{n-2}(t)$ can be expressed in the form $\psi_{n-1}(t)\hat{\bar{x}}^N_{n-2}(t_{n-1})^2 + \gamma_{n-1}(t)$ for $t\in[t_{n-1},T)$ as follows:
\begin{multline}\label{27}
\rho_{n-2}(t)= -\int_{t}^{T} \Bigl( \frac{b^2}{r}s_{n-2}^2(\tau) - q\left(\Gamma \hat{\bar{x}}_{n-2}^N(\tau)\right)^2
 \\ - \sigma^2 p(\tau) \Bigr) d\tau 
+ h\Gamma^2\mathbb{E}\left[(\hat{\bar{x}}_{n-1}^N(T))^2\Big|\hat{\bar{x}}_{n-2}^N(T)\right] 
\end{multline}
The value of \  $\mathbb{E}[(\hat{\bar{x}}_{n-1}^N(T))^2|\hat{\bar{x}}_{n-2}^N(T)]$ is calculated in Appendix B,
with $\psi_{n-1}(t)$ and $\gamma_{n-1}(t)$ for $t\in [t_{n-1},T)$  given by:
\begin{equation*}
\begin{aligned}
\psi_{n-1}(t) 
&= -\int_{t}^{T} \phi_p^2(\tau, t_{n-1})
      \left( \frac{b^2}{r}\,\alpha^2(\tau) - q\Gamma^2 \right)\,d\tau \\
&\quad + h\Gamma^2 \phi_p^2(T, t_{n-1})
\end{aligned}
\end{equation*}
\begin{equation*}
\gamma_{n-1}(t) 
= \int_{t}^{T}\sigma^2 p(\tau)\,d\tau 
   + h\Gamma^2 \frac{\sigma^2}{N}
     \int_{t_{n-2}}^{t_{n-1}} \phi^2(T,s)\,ds
\end{equation*}
\end{remark}
\subsection{Control Policy on a generic interval $[t_j,t_{j+1})$}
In this section, we derive the best response policy over generic interval $[t_j, t_{j+1})$ by solving the HJB equation for $\widetilde{V}_{i}$ to generalize the results obtained in Proposition 1. For this purpose, proceeding recursively, we assume that the following expression holds for $\widetilde{V}_{i}$ for $ t \in [t_{j+1}, t_{j+2})$:
\begin{equation}\label{28}
\begin{aligned}
\widetilde{V}_{i}(t,\widehat{X}_{i,j})
&= p(t)\,x_i^{2}(t) 
   + 2\alpha(t)\,x_i(t)\,\hat{\bar{x}}^N_{j}(t) \\
&\quad + \psi_{j+1}(t)\,\bigl(\hat{\bar{x}}^N_{j}(t)\bigr)^{2} 
   + \gamma_{j+1}(t)
\end{aligned}
\end{equation}
where $\psi_{j+1}$ and $\gamma_{j+1}$ are assumed known time functions. Also, we assume that an equation analogous to \eqref{23} has been established for $\hat{\bar{x}}^N_{j}(t)$. Referring to DP equation in \eqref{11} and the boundary condition in \eqref{13}, the HJB equation for $\widetilde{V}_{i}$ and for $ t \in [t_{j}, t_{j+1})$ can be written as:
\begin{equation}\label{14.2}
\begin{aligned}
0 &= \frac{\partial \widetilde{V}_{i}}{\partial t} 
   + \min_{u_i} \Biggl[ 
        \frac{\partial \widetilde{V}_{i}}{\partial x_i}\,(a x_i + b u_i) \\
&\quad + \Bigl(q\,(x_i - \Gamma \hat{\bar{x}}_{j-1}^N)^{2} + r u_i^{2}\Bigr) 
      + \frac{1}{2}\sigma^{2}\,\frac{\partial^{2} \widetilde{V}_{i}}{\partial x_i^{2}}
   \Biggr]
\end{aligned}
\end{equation}
 To proceed further, we assume the following quadratic form for $\widetilde{V}_{i}$: 
\begin{equation}\label{29}
\widetilde{V}_{i}(t,\widehat{X}_{i,j-1})=p(t)x_i^2(t)+2s_{j-1}(t)x_i(t)+\rho_{j-1}(t)
\end{equation}
where $s_{j-1}(t)$ and $\rho_{j-1}(t)$ are coefficients to be determined, provided the assumed expression actually holds.
By differentiating $\widetilde{V}_{i}$ with respect to $x_i$, the optimal control policy $u_i^*=- \tfrac{b}{r} ( p(t) x_i(t) + s_{j-1}(t) )$ is obtained and substituted into the HJB equation. Solving the HJB equation, subject to the boundary condition in \eqref{13}, and substituting the quadratic form in \eqref{29} into \eqref{14.2}, leads to a system of differential equations for $p(t)$, $s_{j-1}(t)$, and $\rho_{j-1}(t)$, analogous to \eqref{18}, \eqref{19}, and \eqref{20}, respectively. Specifically, recalling boundary conditions, the equations for $s_{j-1}(t)$ and $\rho_{j-1}(t)$ can be written as:
\begin{equation}\label{31}
\begin{aligned}
\frac{ds_{j-1}}{dt} 
&= -\left(a - \frac{b^2}{r} p \right) s_{j-1} 
   + q \Gamma \hat{\bar{x}}_{j-1}^N, \\
s_{j-1}(t_{j+1}) 
&= \alpha(t_{j+1})\,
   \mathbb{E}\!\left[\hat{\bar{x}}^N_{j}(t_{j+1}) 
   \,\middle|\, \hat{\bar{x}}^N_{j-1}(t_{j+1})\right] \\
&= \alpha(t_{j+1})\,\hat{\bar{x}}^N_{j-1}(t_{j+1})
\end{aligned}
\end{equation}
\begin{equation}\label{32}
\begin{aligned}
\frac{d \rho_{j-1}}{dt} 
&= \frac{b^2}{r} s_{j-1}^2 
   - q \left(\Gamma \hat{\bar{x}}_{j-1}^N\right)^{2} 
   - \sigma^2 p, \\
\rho_{j-1}(t_{j+1}) 
&= \mathbb{E}\!\Bigl[ 
       \psi_{j+1}(t_{j+1})\,\bigl(\hat{\bar{x}}^N_{j}(t_{j+1})\bigr)^{2} 
       \\&+ \gamma_{j+1}(t_{j+1}) 
       \,\Big|\, \hat{\bar{x}}^N_{j-1}(t_{j+1})
   \Bigr]
\end{aligned}
\end{equation}
As in the construction leading to \eqref{27}, $\rho_{j-1}(t)$ admits a quadratic representation whose coefficients are obtained by backward propagation; we make this explicit in \eqref{34}--\eqref{37}. To solve for $s_{j-1}(t)$, we assume, inspired by \eqref{24} and the boundary condition in \eqref{31}, that:
\begin{equation}\label{extra1} 
s_{j-1}(t) = \alpha(t)\hat{\bar{x}}^N_{j-1}(t). 
\end{equation}
The resulting best response policy for the interval $[t_j, t_{j+1})$ is then:
\begin{equation}\label{33}
u_i^*(t) =  - \frac{b}{r} \left( p(t) x_i(t) + \alpha(t)\hat{\bar{x}}^N_{j-1}(t) \right).
\end{equation}
Note that the DP/Riccati solution yields \eqref{33}
so the  best response is Markov in $(x_i(t),\hat{\bar x}^{N}_{j-1}(t))$.
Next, we use the fixed-point requirement to derive the evolution of the virtual measurement 
$\hat{\bar{x}}_{j-1}^N(t)$ over the interval $[t_j, t_{j+1})$. By substituting the optimal control 
$u_i^*(t)$ from \eqref{33} into the individual state dynamics and averaging states of all agents, differential equation for $\hat{\bar{x}}_{j-1}^N(t)$ similar to \eqref{23} will be  obtained. Also, note that $p(t)$  in \eqref{29} and $\alpha(t)$ in \eqref{extra1} will satisfy Riccati equations \textit{respectively consistent} with  \eqref{18} and \eqref{21}.
Finally, $\rho_{j-1}(t)$ is computed by solving \eqref{32} over $t \in [t_j, t_{j+1})$, as follows:
\begin{multline}
{\rho}_{j-1}(t)= -\int_{t}^{t_{j+1}} \Bigl( \frac{b^2}{r} s_{j-1}^2(\tau) - q(\Gamma \hat{\bar{x}}_{j-1}^N(\tau))^2
- \\ \sigma^2 p(\tau) \Bigr) d\tau 
 + \rho_{j-1}(t_{j+1})
=\psi_j(t)\hat{\bar{x}}^N_{j-1}(t_{j})^2+\gamma_j(t)
\label{34}
\end{multline}
where $\psi_j(t)$ and $\gamma_j(t)$ are computed recursively:
\begin{equation}\label{36}
\begin{aligned}
\psi_j(t) 
&= \psi_{j+1}(t_{j+1})\,\phi_{p}(t_{j+1}, t_{j})^{2} \\
&\quad - \int_{t}^{t_{j+1}} 
   \phi_{p}(\tau, t_{j})^{2}\,
   \left( \frac{b^2}{r}\,\alpha^{2}(\tau) - q\Gamma^{2} \right) 
   d\tau
\end{aligned}
\end{equation}
\begin{equation}\label{37}
\begin{aligned}
\gamma_j(t) 
&= \gamma_{j+1}(t_{j+1}) 
   + \sigma^{2} \int_{t}^{t_{j+1}} p(\tau)\,d\tau \\
&\quad + \psi_{j+1}(t_{j+1})\,\frac{\sigma^{2}}{N} 
   \int_{t_{j-1}}^{t_{j}} \phi^{2}(t_{j+1},s)\,ds.
\end{aligned}
\end{equation}
The value of $\mathbb{E}\left[\hat{\bar{x}}_{j}^N(t_{j+1})^2 \, \middle| \, \hat{\bar{x}}_{j-1}^N(t_{j+1})\right]$ is detailed in Appendix B. These results allow us to recursively compute $\psi_j(t)$ and $\gamma_j(t)$ for interval $[t_j, t_{j+1})$.
\subsection{Virtual Measurement}
Thus far, following Assumption~\ref{assump2}, we have derived the best-response policy and the predictor equation, assuming knowledge of the virtual measurement $\hat{\bar{x}}^N_{j-1}(t_j)$. In this section, we specify this quantity and propose a structure for it. To achieve this, we close the loop in \eqref{1} by incorporating the best response policy derived in \eqref{33}. The detailed steps for these calculations are provided in Appendix A, with the resulting expression for $[t_j, t_{j+1})$ given as:
\begin{equation}\label{45_2}
\begin{aligned}
\hat{\bar{x}}^N_{j-1}(t_j) 
&= \phi(t_j,t_{j-1})\,\bar{x}^N(t_{j-1}) \\
& - \frac{b^2}{r} 
   \int_{t_{j-1}}^{t_j} 
      \phi(t_j,s)\,\alpha(s)\,\phi_{p}(s,t_{j-1})\,
      \hat{\bar{x}}_{j-2}^N(t_{j-1})\,ds.
\end{aligned}
\end{equation}
In particular, $\hat{\bar{x}}^N_{j-1}(t_j)$ summarizes all past reveals, so the predictor on $[t_j,t_{j+1})$ depends on the past only through this quantity, as posited in Assumption~\ref{assump2}.\\
DP solution for $[t_j,t_{j+1})$ and finding the virtual measurement lead us to the following theorem which is the main result of the paper.
\begin{theorem}
If Assumptions \ref{assump1}-\ref{assump3} hold, then, the set of Markov Nash equilibrium strategies for $k=1,\ldots,N$ are given by:
\begin{equation}\label{42}
u_k^*(t) = -\frac{b}{r} \left( p(t) x_k(t) + \alpha(t) \hat{\bar{x}}_{j-1}^N(t) \right)
\end{equation}
where the predictor $\hat{\bar{x}}_{j-1}^N(t)$ satisfies the following equation:
\begin{equation}\label{43}
\hat{\bar{x}}^N_{j-1}(t)=\phi_{p}(t,t_{j})\hat{\bar{x}}^N_{j-1}(t_{j})
\end{equation}
and $\hat{\bar{x}}^N_{j-1}(t_{j})$ is found via \eqref{45_2}.
\end{theorem}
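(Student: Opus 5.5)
The proof will be a backward induction over the observation intervals, followed by a forward pass that pins down the virtual measurements. \textbf{Induction hypothesis} $(H_{j})$: on $[t_{j+1},T]$ the adjusted value function has the quadratic form \eqref{28}, with $p$ solving \eqref{18}, $\alpha$ solving \eqref{21}, and known coefficients $\psi_{j+1},\gamma_{j+1}$. The best response there is \eqref{42} with the predictor propagated by $\phi_p$.

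\textbf{Base case.} $(H_{n-2})$ is exactly Proposition 1 together with Remark~\ref{remark6}. These give \eqref{22}, \eqref{23}, \eqref{25} and the representation $\rho_{n-2}=\psi_{n-1}\hat{\bar{x}}^N_{n-2}(t_{n-1})^2+\gamma_{n-1}$.

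\textbf{Inductive step on $[t_j,t_{j+1})$.} By Lemma 1, it suffices to minimize $\widetilde V_i$ rather than $V_i$. I insert the quadratic ansatz \eqref{29} into the HJB \eqref{14.2}, whose minimizer is $u_i^*=-\tfrac{b}{r}(p x_i+s_{j-1})$. Matching coefficients gives \eqref{18} for $p$ and \eqref{31}, \eqref{32} for $s_{j-1},\rho_{j-1}$. The terminal conditions come from \eqref{13} and $(H_j)$; the key fact is $\mathbb E[\hat{\bar{x}}^N_{j}(t_{j+1})\mid \hat{\bar{x}}^N_{j-1}(t_{j+1})]=\hat{\bar{x}}^N_{j-1}(t_{j+1})$, which is the tower property applied to the definition \eqref{5}. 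I then verify the ansatz \eqref{extra1}:
\begin{itemize}
\item Close the loop in \eqref{1} using \eqref{33} and average over the $N$ agents; the noise average has zero mean.
\item Take the conditional expectation given $\bar{x}^N(t_{j-1})$. This shows the predictor satisfies $\dot{\hat{\bar x}}=(a-\tfrac{b^2}{r}(p+\alpha))\hat{\bar x}$, which yields \eqref{43}.
\item Substitute $s_{j-1}=\alpha\hat{\bar{x}}^N_{j-1}$ into \eqref{31}. Using this predictor equation, \eqref{31} reduces identically to \eqref{21}. The endpoint $s_{j-1}(t_{j+1})=\alpha(t_{j+1})\hat{\bar{x}}^N_{j-1}(t_{j+1})$ is consistent with \eqref{31} because $\alpha$ is continuous across $t_{j+1}$.
\item By Assumption~\ref{assump3}, $\alpha$ exists and is unique on the whole of $[0,T]$, so the ansatz is well defined on every interval.
\end{itemize}

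\textbf{Closing the induction.} Integrating \eqref{32} and using \eqref{25}-type propagation gives $\rho_{j-1}=\psi_j\hat{\bar{x}}^N_{j-1}(t_j)^2+\gamma_j$, with $\psi_j,\gamma_j$ given by \eqref{36} and \eqref{37}. The conditional second moment entering the terminal condition of \eqref{32} is the one computed in Appendix B. This re-establishes the form \eqref{28} one interval earlier, so $(H_{j-1})$ holds.

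\textbf{Nash property.} Since \eqref{33} minimizes each agent's cost given that the others use the same law, no unilateral deviation in $\mathcal M_i$ can lower $J_i$, by Definition 1 and Lemma 1. Because the best response is affine in the state, it is also consistent with Assumption~\ref{assump2}.

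\textbf{Forward pass and main obstacle.} The remaining task is to express the virtual measurement via \eqref{45_2}, computed forward from the a priori initial guess, as in Appendix A. On $[t_{j-1},t_j)$ the closed-loop mean obeys $d\bar{x}^N=(a-\tfrac{b^2}{r}p)\bar{x}^N dt-\tfrac{b^2}{r}\alpha\hat{\bar{x}}^N_{j-2}dt+\tfrac{\sigma}{N}\sum_i dw_i$. Variation of constants with kernel $\phi$, followed by conditioning on $\bar{x}^N(t_{j-1})$, kills the noise term, and substituting \eqref{25} for $\hat{\bar{x}}^N_{j-2}$ gives \eqref{45_2}. I expect this consistency step to be the main obstacle: the DP runs backward, but the predictor it relies on is defined only forward. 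The point to check carefully is that conditioning only on $\bar{x}^N(t_{j-1})$, with the previous virtual measurement treated as already fixed from the forward recursion, yields a predictor that is deterministic on the interval. This legitimizes treating $\hat{\bar{x}}^N_{j-1}$ as a known tracking signal in the HJB analysis.
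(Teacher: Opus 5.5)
Your proposal is correct and follows essentially the same route as the paper. It uses Proposition~1 and Remark~\ref{remark6} as the base case, verifies the ansatz $s_{j-1}=\alpha\hat{\bar x}^N_{j-1}$ through the closed-loop predictor equation, runs the $\psi_j,\gamma_j$ recursion \eqref{36}--\eqref{37} via Appendix~B, and computes \eqref{45_2} forward as in Appendix~A; you simply make explicit the backward induction the paper calls ``proceeding recursively.''

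One minor point concerns the identity $\mathbb{E}[\hat{\bar x}^N_{j}(t_{j+1})\mid \hat{\bar x}^N_{j-1}(t_{j+1})]=\hat{\bar x}^N_{j-1}(t_{j+1})$. Under the literal definition \eqref{5} the $\sigma$-algebras generated by $\bar x^N(t_{j-1})$ and $\bar x^N(t_j)$ are not nested, so the tower property applies only if you read the predictors as conditioned on the full observation history (as Assumption~\ref{assump2} allows). Otherwise, cite the decomposition in Appendix~B, where the difference is a zero-mean noise integral over $[t_{j-1},t_j]$ that is independent of $\hat{\bar x}^N_{j-1}(t_{j+1})$.
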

\begin{remark} 
Note that under Assumption~1, the Nash equilibrium is exact for finite $N$, not an approximation. In line with standard MFG theory, we expect that as $N \to \infty$, the empirical mean becomes deterministic, which would naturally eliminate the need for Assumption~1 to prevent an infinite recursion of beliefs. However, rigorously establishing $\epsilon$-Nash bounds under this specific delayed and discrete observation structure requires a formal limit argument, which we defer to future work.
\end{remark}
\begin{remark}
The uniqueness of solution requirement in Assumption \ref{assump3} is imposed so that no ambiguity arises as to what best response agents should opt for. Please refer to Appendix C for a set of sufficient conditions on the parameters of the cost and dynamics, or the length of the control horizon $[0,T]$, so as  to prevent finite escape time in the Riccati equation of Assumption \ref{assump3} above. By continuity and boundedness, existence and uniqueness of solutions would then be guaranteed.

\end{remark}
\begin{remark}
 Recalling Remark 3 above, the virtual measurements (Empirical mean predictors at start of intervals $[t_j,t_{j+1})$ cannot be computed starting at an arbitrary time $t_j, j \ge 2$ even if the observation of empirical mean at time $t_{j-1}$ is available. The only way to carry that evaluation is to start by computing the virtual measurement at $t=0$ which is indeed the only point where it is arbitrarily initialized via a common a priori guess by all agents. Subsequently, one proceeds forward in time, to generate future virtual measurements sequentially as empirical mean observations become available and based on  \eqref{45_2}. For simplicity, we set $\hat{\bar{x}}_{-1}^N(0) = 0$.
\end{remark}
In Algorithm 1, we summarize the steps required to compute the Nash equilibrium of the game as outlined in Theorem 1.
\begin{algorithm}[t]
\caption{DP Algorithm for Best Response Computation}
\label{alg:DP-NE}
\begin{algorithmic}

\State \textbf{Phase I (Backward Pass):} Solve the Riccati differential equations on $[0, T]$ for $p(t)$ in \eqref{18} and $\alpha(t)$ in \eqref{21}.

\State \textbf{Phase II (Forward Pass):} Initialization:
\Statex \hspace{2em}
\begin{equation*}
    \hat{\bar{x}}_{-1}^N(0) = 0
\end{equation*}

\For{$j = 0,\ldots,n-1$}

\Statex \hspace{1em} \textbf{1) Predictor Evolution for $\bm{t \in [t_j,\,t_{j+1})}$:}
\Statex
    \begin{equation*}
        \hat{\bar{x}}_{\,j-1}^N(t)
      \;=\;
        \phi_{p}\!\bigl(t,\,t_j\bigr)\;
        \hat{\bar{x}}_{\,j-1}^N\!\bigl(t_j\bigr)
    \end{equation*}

\Statex \hspace{1em}
\textbf{2) Nash equilibrium Control Computation for $t \in [t_j, t_{j+1})$:}
\Statex \hspace{2em} For each agent $k=1, \dots, N$, the control is:
\Statex
    \begin{equation*}
        u_k^*(t)
      \;=\;
      -\frac{b}{r}\,\Bigl(
        p(t)\,x_k(t)
        \;+\;\alpha(t)\,\hat{\bar{x}}_{\,j-1}^N(t)
      \Bigr)
    \end{equation*}

\Statex \hspace{1em} \textbf{3) Virtual Measurement Update at $\bm{t_{j+1}}$:} Given $\bar{x}^N(t_{j})$ and the known $p(\cdot)$ and $\alpha(\cdot)$ from Phase I, compute $\hat{\bar{x}}_{j}^N(t_{j+1})$ using \eqref{45_2}.

\EndFor

\end{algorithmic}
\end{algorithm}
\subsection{Prediction Error Formula}
To determine the prediction error for $t \in [t_j, t_{j+1})$, $j=1,\ldots,n-1$, the necessary steps are outlined at the end of  Appendix A, where the empirical mean is expressed in terms of the predictor, resulting in:
\begin{equation}\label{44}
\Delta_{j-1}(t) = \sigma \int_{t_{j-1}}^{t} \phi(t,s) d\bar{w}^N(s)
\end{equation}
where $d\bar{w}^N(s)=1/N\sum_{k=1}^N dw_k(s)$ and $\phi$ is the open-loop kernel defined in Remark \ref{remark6}. The error in \eqref{44} is consistent with Lemma 1, as the difference reduces to a term depending only on noise through the prediction error variance.
\begin{remark} The model can be extended to the multidimensional case, where the scalars $(a,b,\Gamma, q,r,h)$ become matrices
$(A,B,\Gamma,Q,R,H)$. While existence and uniqueness of solutions to the multidimensional equivalent of (17) can be subsumed by imposing conditions guaranteeing convexity of the cost, conditions sufficient for Assumption~3 to hold are unfortunately not within reach. The DP steps remain identical.\end{remark}
\section{Performance Evaluation}
In this section, the value of the original cost function $V_{i}(t, X_{i,j-1})$ is derived from the value of the adjusted cost function $\widetilde{V}_{i}(t,\widehat X_{i,j-1})$. Since the empirical mean observation is updated every $\Delta t$ seconds, we \emph{decompose} the difference $V_{i}-\widetilde{V}_{i}$ over successive intervals via Bellman’s recursion. Leveraging the DP equations in \eqref{9} and \eqref{11} and the proof of Lemma~1, for $t\in[t_j,t_{j+1})$ and $j=1,\ldots,n-1$ we obtain
\begin{multline}\label{42.2}
\Delta V_{j-1}(t)  := {V}_{i}(t, X_{i,j-1})- \widetilde{V}_{i}(t,\widehat{X}_{i,j-1})  \\
=  \mathbb{E} \left[ \int_t^{t_{j+1}} \left(q \left(x_i(\tau) - \Gamma \bar{x}^N(\tau)\right)^2 - q\left(x_i(\tau) \right. \right. \right. \\ \left. \left.  \left. -\Gamma \hat{\bar{x}}_{j-1}^N(\tau)\right)^2 \right) d\tau
+ \Delta V_{j}(t_{j+1}) \middle| \widehat X_{i,j-1} \right]  \\= q\Gamma^2\mathbb{E} \left[ \int_t^{t_{j+1}}  \Delta_{j-1}^2(\tau) d\tau  \right] + \mathbb{E}\left[\Delta V_{j}(t_{j+1})\middle|\widehat X_{i,j-1}\right].
\end{multline}
At time $t=T$ the performance difference is
\begin{equation}
\mathbb{E}\left[\Delta V_{n-1}(T)\middle|\widehat X_{i,n-1}\right]=h\Gamma^2\mathbb{E}\left[\Delta_{n-1}^2(T)\right].
\end{equation}
Furthermore, in \eqref{42.2}, 
\begin{equation*}
\mathbb{E} \left[ \int_{t_j}^{t_{j+1}}  \Delta_{j-1}^2(\tau) d\tau  \right]=\frac{\sigma^2}{N}  \int_{t_j}^{t_{j+1}} \int_{t_{j-1}}^{\tau}\phi^2(\tau,s) dsd\tau.
\end{equation*}
Finally, we define the total performance loss for $[0,T]$ for generic agent $i$ relative to the case of \textit{continuous global state observations} in \cite{huang2019linear} as
\begin{equation}
\Delta J_i \;:=\; V_i(0, {X}_{i,-1}) \;-\; V_i^{Cont}(0,x_i,\bar{x}^N)
\label{DeltaJ}
\end{equation}
with  the expression of  $V_i^{Cont}$ in \cite{huang2019linear}  recalled below as:
\begin{multline}\label{46.2}
V_i^{Cont}(0, x_i, \bar{x}^N) = p(0)x_i^2(0) + 2\,\alpha(0)\bar{x}^N(0)x_i(0) \\+ \psi(0)(\bar{x}^N(0))^2 + \sigma^2 \int_{0}^{T} p(\tau)\,d\tau.
\end{multline}
Using \eqref{29} and \eqref{42.2} one can compute $V_i(0, X_{i,-1})$. Further using \eqref{46.2} into \eqref{DeltaJ}, $\Delta J_i$ is obtained.
\section{Simulations}
In addition to comparison with the performance result in \cite{huang2019linear}, a further performance comparison with the case of discrete observations  with \textit{zero-latency} of the empirical mean (see \cite{rajabali2024partial}) is reported. Parameters are as in Fig.~\ref{fig1}.\\
 Figure~\ref{fig1} compares the three total costs as functions of the sampling period $\Delta t$ and the population size $N$. For a moderate population ($N=10$), the delayed case with $\Delta t=2$\,s is the most expensive, reflecting both discrete sampling and the added reveal delay. For fixed $\Delta t$, increasing $N$ drives the discrete–information costs toward the continuous–information cost, consistent with the $O(1/N)$ variance scaling of the empirical mean.
\begin{figure}[!t]
\centerline{\includegraphics[width=\columnwidth]{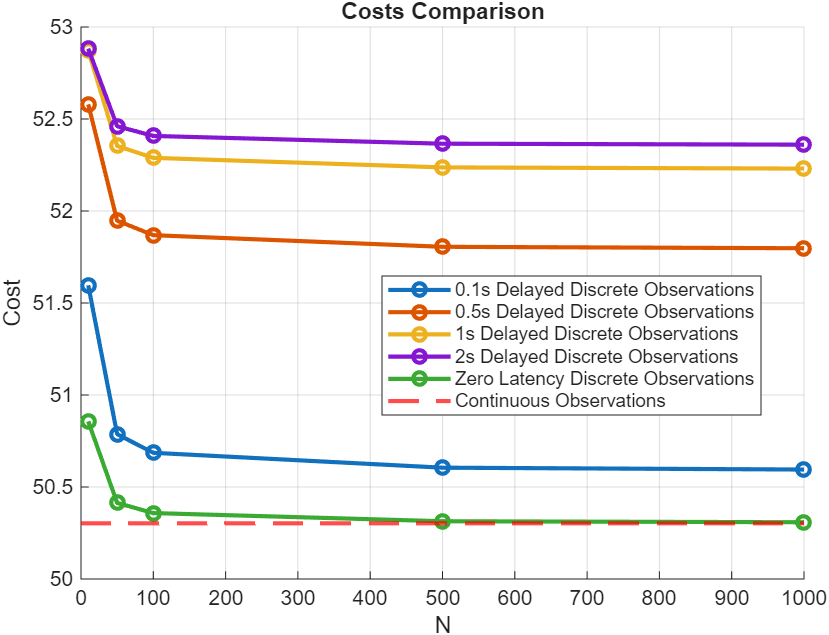}}
\caption{Cost comparison across information patterns for varying $\Delta t$ and $N$.
Parameters: $a=1$, $b=1$, $q=1$, $r=1$, $T=20$ s, $\sigma=1$, $\Gamma=0.8$.}
\label{fig1}
\end{figure}\\
Figure~\ref{fig2}  reports the \emph{total delay penalty}
$\Delta J$ as a function of $\Delta t$ for $N=100$.
The curve is monotone and gently saturating, in agreement with what we expect when delay increases.
\begin{figure}[!t]
  \centering
  \includegraphics[width=\columnwidth]{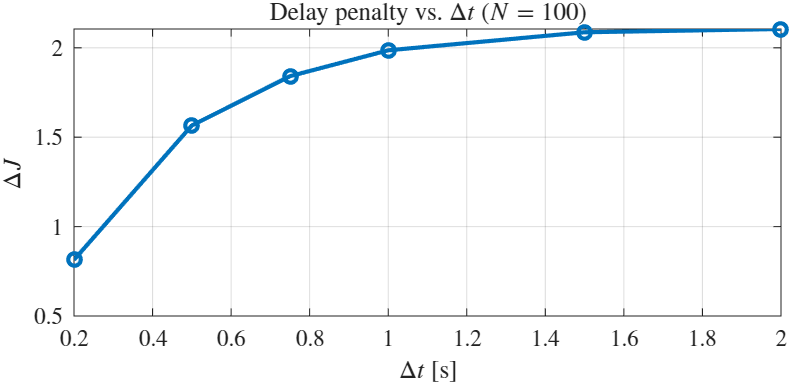}
  \caption{Delay penalty $\Delta J$ vs.\ sampling period $\Delta t$ for $N=100$ and $T=20$ s}.
  \label{fig2}
\end{figure}
Figure~\ref{fig3} shows the ensemble mean $\bar{x}^{N}(t)$ (solid) and the delayed predictor
$\hat{\bar{x}}^{N}_{j-1}(t)$ (dashed) for two sampling periods. Vertical dotted lines mark reveal times $t_{j+1}$.
Between reveals, $\hat{\bar x}^{N}_{j-1}(t)$ evolves deterministically according to the predictor dynamics; at each reveal it will be updated based on the newly available sample.
With the ``zero'' initial prior, the predictor is flat on $[0,t_1)$ and then snaps to $\bar x^N(0)$ at the first reveal.
Larger $\Delta t$ produces larger tracking error (visible band widening and slower predictor corrections), which matches the penalty growth in Fig.~\ref{fig2}.
\begin{figure}[!t]
  \centering
 \includegraphics[width=\columnwidth]{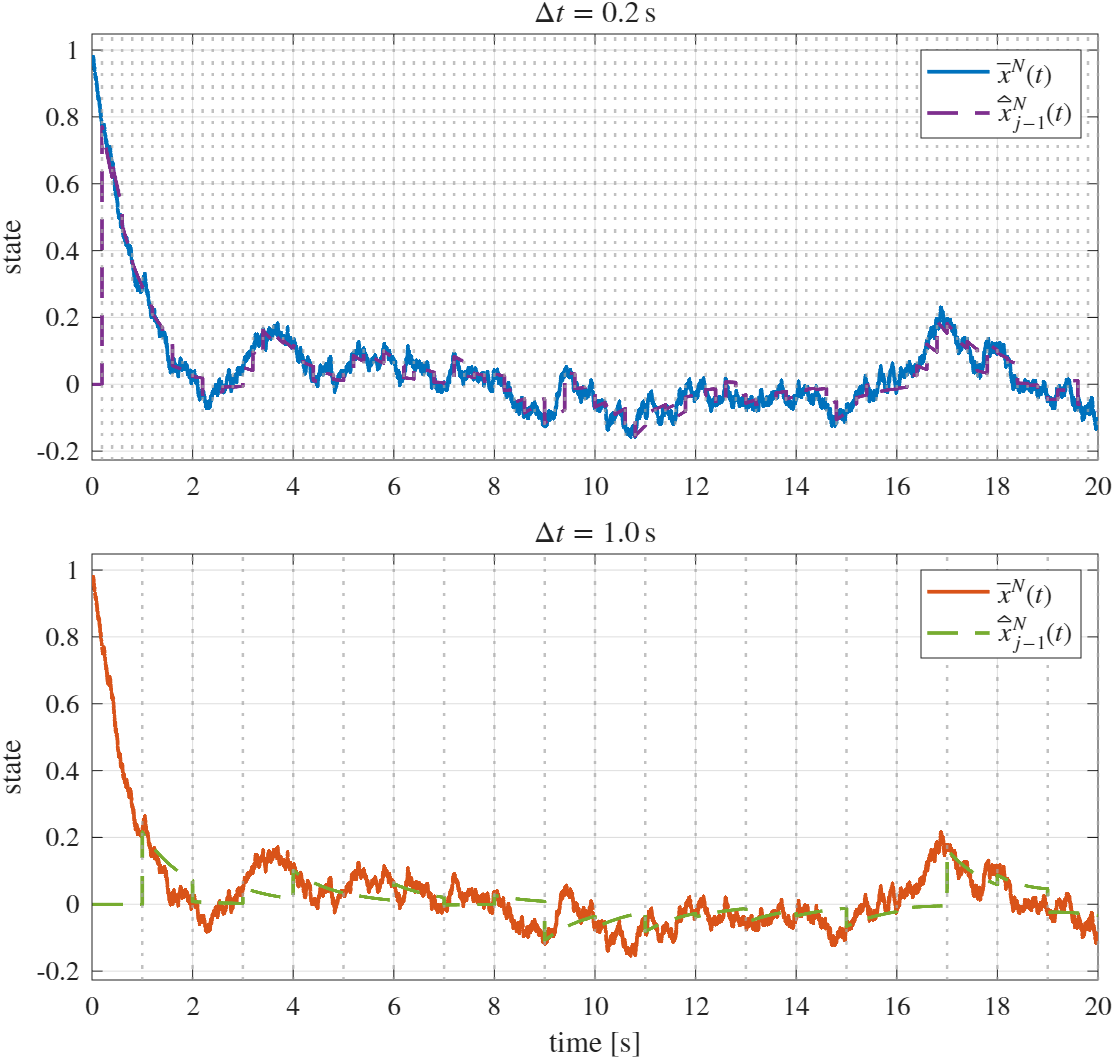}
  \caption{Empirical mean $\bar{x}^{N}(t)$ (solid) and delayed predictor
  $\hat{\bar{x}}^{N}_{j-1}(t)$ (dashed).($N=100$). Vertical dotted lines: reveal times $t_{j+1}$. Top: $\Delta t=0.2$\,s. Bottom: $\Delta t=1.0$\,s.}
  \label{fig3}
\end{figure}
\section{Conclusion}
Our research has explored  aggregative games featuring LQG dynamics with delayed discrete observations of aggregate state. By introducing the concept of virtual measurement as a sufficient statistic of past observations, we successfully developed Markovian Nash strategies using DP principles. Numerical evaluations of the loss of performance induced by delays in empirical mean  observations were carried out. The results appear to be generalizable to multidimensional systems and multiple observation time delay steps. More importantly though, they are likely to constitute an analytical  building block for the analysis of aggregative games on networks where agents learn about empirical mean via consensus algorithms.
\section*{Appendices}
\subsection*{A. Calculation of Virtual Measurement and Empirical mean}
Under the best response policy  $u_i^*$ in \eqref{33}, and for $t \in [t_{j-1}, t_{j})$ , $x_i(t)$ evolves as:
\begin{multline}\label{380}
dx_i(t) = \left(a x_i(t) + b u_i(t)\right) dt + \sigma dw_i(t) =\Bigl(a x_i(t) \\ -\frac{b^2}{r} (p(t) x_i(t) + \alpha(t) \hat{\bar{x}}_{j-2}^{N}(t))\Bigr) dt + \sigma dw_i(t) 
\end{multline}
Integrating \eqref{380} from $t_{j-1}$ to $t \in [t_{j-1}, t_j)$ and
averaging over all $i = 1, \dots, N$ results in:
\begin{gather}\label{em}
\bar{x}^N(t) = \phi(t,t_{j-1}) \bar{x}^N(t_{j-1}) - \\ \nonumber\frac{b^2}{r} \int_{t_{j-1}}^{t} \phi(t,s) \alpha(s) \hat{\bar{x}}_{j-2}^N(s) ds + \sigma \int_{t_{j-1}}^{t} \phi(t,s) d\bar{w}^N(s). 
\end{gather}
At $t=t_j$, the empirical mean $\bar{x}^N(t_{j-1})$ 
is revealed for the first time. Using \eqref{5}, \eqref{em}, and Lemma 3 of \citet{rajabali2023can}, one obtains the \emph{new} virtual measurement as follows:
\begin{gather}
\hat{\bar{x}}^N_{j-1}(t_j)= \mathbb{E}\Bigl[\bar{x}^N(t_j)\Big|\bar{x}^N(t_{j-1})\Bigr]= \phi(t_j,t_{j-1}) \bar{x}^N(t_{j-1}) \nonumber \\- \frac{b^2}{r} \int_{t_{j-1}}^{t_j} \phi(t_j,s) \alpha(s) \phi_{p}(s,t_{j-1}) \hat{\bar{x}}_{j-2}^N(t_{j-1}) ds.
\label{41}
\end{gather}
Subtracting \eqref{em} and \eqref{41} at time $t_j$ yields:
\begin{equation}\label{46}
\bar{x}^N(t_{j}) = \hat{\bar{x}}^N_{j-1}(t_{j})+  \sigma \int_{t_{j-1}}^{t_{j}} \phi(t_{j},s) d\bar{w}^N(s).
\end{equation}
%

\subsection*{B. Calculation of  $\mathbb{E}[\hat{\bar{x}}^N_{j}(t_{j+1})^2|\hat{\bar{x}}^N_{j-1}(t_{j+1})]$}
First,
$\hat{\bar{x}}^N_{j}(t_{j+1})$ is computed from \eqref{41} and $\bar{x}^N(t_{j})$ from \eqref{46}:
\begin{gather}
\hat{\bar{x}}^N_{j}(t_{j+1})=\phi(t_{j+1},t_{j}) \bar{x}^N(t_{j}) \nonumber \\- \frac{b^2}{r} \int_{t_{j}}^{t_{j+1}} \phi(t_{j+1},s) \alpha(s) \phi_{p}(s,t_{j}) \hat{\bar{x}}_{j-1}^N(t_{j}) ds=\nonumber \\ \phi(t_{j+1},t_{j})\hat{\bar{x}}^N_{j-1}(t_{j}) + \phi(t_{j+1},t_{j}) \sigma \int_{t_{j-1}}^{t_{j}} \phi(t_{j},s) d\bar{w}^N(s) +\nonumber \\- \frac{b^2}{r} \int_{t_{j}}^{t_{j+1}} \phi(t_{j+1},s) \alpha(s) \phi_{p}(s,t_{j}) \hat{\bar{x}}_{j-1}^N(t_{j}) ds = \nonumber \\\hat{\bar{x}}^N_{j-1}(t_{j+1}) + \phi(t_{j+1},t_{j}) \sigma \int_{t_{j-1}}^{t_{j}} \phi(t_{j},s) d\bar{w}^N(s)
\end{gather}
then straightforward calculations yield:
\begin{multline}\label{65}
\mathbb{E}[\hat{\bar{x}}^N_{j}(t_{j+1})^2|\hat{\bar{x}}^N_{j-1}(t_{j+1})] = \hat{\bar{x}}^N_{j-1}(t_{j+1})^2+ \\ \frac{\sigma^2}{N} \int_{t_{j-1}}^{t_{j}} \phi^2(t_{j+1},s) ds.
\end{multline}
\subsection*{C. On the Existence and Boundedness of $\alpha(t)$}
This appendix provides the conditions under which the solution to the Riccati differential equation for $\alpha(t)$ is guaranteed to exist and remain bounded on the finite horizon $[0, T]$.
We analyze the Riccati equation for the sum $\beta(t) := p(t) + \alpha(t)$.
\begin{equation}
\dot{\beta} = \frac{b^2}{r}\beta^2 - 2a\beta - q(1 - \Gamma), \quad \beta(T) = h(1 - \Gamma).
\label{eq:app_beta}
\end{equation}
To analyze this equation for finite-escape singularities, we perform a standard transformation. Let $w(t) = \frac{b^2}{r}\beta(t) - a$. Then $w(t)$ satisfies the canonical Riccati equation:
\begin{equation}
\dot{w} = w^2 + E,
\label{eq:app_w}
\end{equation}
where the constant $E$ is given by:
\begin{equation}
E := \frac{b^2}{r}q(\Gamma - 1) - a^2.
\label{eq:app_E}
\end{equation}
The behavior of the solution to \eqref{eq:app_w} depends critically on the sign of $E$.\\
\textbf{Case 1: $E \le 0$.}
If $E \le 0$, the solution to \eqref{eq:app_w} is of the form of a hyperbolic tangent, which is bounded for all finite time. Therefore, no finite-escape singularity can occur. The condition $E \le 0$ is equivalent to:
$$ \frac{b^2}{r}q(\Gamma - 1) - a^2 \le 0 \quad \iff \quad \Gamma \le 1 + \frac{a^2 r}{b^2 q}. $$
Under this condition, $\beta(t)$ remains bounded on $[0, T]$. Since $p(t)$ is already known to be bounded, it follows that $\alpha(t) = \beta(t) - p(t)$ must also be bounded.\\
\textbf{Case 2: $E > 0$.}
If $E > 0$ (i.e., $\Gamma > 1 + \frac{a^2 r}{b^2 q}$), the solution to \eqref{eq:app_w} involves a tangent function, which admits finite-escape singularities. Solving backwards from $w(T)$, the escape time $t_{esc}$ is given by:
$$ t_{esc} = T - \frac{\pi/2 - \arctan(w(T)/\sqrt{E})}{\sqrt{E}}. $$
A singularity occurs within the horizon $[t_0, T]$ if $t_{esc} > t_0$. For any fixed horizon length $T-t_0$, this condition can always be met by choosing a sufficiently large $\Gamma$. To guarantee a well-defined Nash equilibrium on any arbitrary finite horizon, we must preclude this possibility. The condition $E \le 0$, imposed via the bound on $\Gamma$ in Assumption 3, is a simple and sufficient condition to ensure no such singularity occurs.

\bibliographystyle{elsarticle-harv}
\bibliography{autosam}
\end{document}